\documentclass[conference]{IEEEtran}

\usepackage{cite}
\usepackage{textcomp}
\usepackage{amsmath, amssymb, amsfonts}
\usepackage{accents}
\usepackage{color}
\usepackage{graphicx}
\usepackage{enumerate}
\usepackage{booktabs}
\usepackage{subcaption}
\usepackage{mathrsfs}
\usepackage{mathtools}
\usepackage{dsfont}
\usepackage{relsize}
\usepackage{cmtt}

\usepackage{algorithm}
\usepackage{algpseudocode}

\makeatletter
\let\NAT@parse\undefined
\makeatother
\usepackage{hyperref}  
\hypersetup{
    colorlinks= true,
    citecolor= black,
    linkcolor= black, 
    pdfborder= {0 0 0},
}

\DeclareMathOperator{\tr}{tr}
\DeclareMathOperator{\minimize}{minimize}

\DeclareMathOperator{\E}{\mathsf{E}}
\DeclareMathOperator{\Cov}{\mathsf{cov}}

\DeclareMathOperator{\sense}{\mathsf{sense}}
\DeclareMathOperator{\comm}{\mathsf{comm}}

\newtheorem{definition}{Definition}

\newtheorem{theorem}{Theorem}
\newtheorem{proposition}{Proposition}

\begin{document}

\title{Feedback Control via Integrated Sensing and Communication: Uncertainty Optimisation}
\author{
Touraj Soleymani, Mohamad Assaad, and John S. Baras
\thanks{A preliminary version of this work has been submitted to the IEEE SPAWC 2026 conference. T.~Soleymani is with the University of London, United Kingdom (e-mail: {\tt\footnotesize touraj.soleymani@citystgeorges.ac.uk}). M.~Assaad is with the CentraleSup\'{e}lec, University of Paris-Saclay, France (e-mail: {\tt\footnotesize mohamad.assaad@centralesupelec.fr}). J.~S.~Baras is with the University of Maryland College Park, United States (e-mail: {\tt\footnotesize baras@umd.edu}).}%
}
\maketitle

\begin{abstract}
This paper studies integrated sensing and communication (ISAC) coordination for feedback control tasks under shared platform constraints. We consider a cyber-physical system in which a remote dynamical process (i.e., remote source) is regulated with the support of an ISAC-enabled base station that alternates between sensing the source state and communicating control-relevant information to the source, with the two operations semantically intertwined rather than serving separate targets and users. For a Gauss--Markov source with Bernoulli-distributed sensing and communication links and a finite-horizon linear-quadratic-Gaussian (LQG) cost, we derive the optimal ISAC and control policies. Under a Bellman-operator condition, we prove that the optimal ISAC policy at the base station follows an order-threshold structure in terms of the source and base-station estimation covariances, while the optimal control policy at the source follows a certainty-equivalent structure in terms of the source state estimate. We show that the threshold region, defined as the set of estimation covariance pairs for which communication is preferred over sensing, expands with increasing source uncertainty and contracts with increasing base-station uncertainty. Our numerical analysis validates the theoretical findings.
\end{abstract}

\begin{IEEEkeywords}
Cyber-physical systems, Kalman filter, linear-quadratic-Gaussian control, semantic communication.
\end{IEEEkeywords}

\section{Introduction}
Integrated sensing and communication (ISAC) has recently emerged as a transformative paradigm for future wireless systems. This integration is particularly promising for cyber-physical systems, where dynamical processes are monitored through sensors, coordinated through communication networks, and influenced by actuators. In such systems, sensing, communication, and control are inherently coupled, and the overall system performance depends on their interaction within a feedback loop. Conventional methodologies typically treat these aspects separately: sensing is modelled as an independent measurement unit, communication is abstracted as a channel with prescribed constraints, and control is designed assuming a fixed information pattern. While this modular approach simplifies analysis and design, it is inappropriate for ISAC-enabled feedback control tasks, where sensing and communication compete for limited resources.

Note that ISAC can be studied at several levels, including waveform-level integration through joint signalling design, resource-level integration through shared resource design, and decision-level integration through unified operational policy design. This paper mainly focuses on integration at the decision level, while abstracting the underlying physical interaction through sensing and communication reliability and resource parameters. We consider an ISAC-enabled base station that supports a remote source by dynamically choosing between sensing the source state and communicating control-relevant information to the source, with the two operations semantically intertwined rather than serving separate targets and users. Both sensing and communication are unreliable and costly. The objective is to optimise a finite-horizon control cost at the source, together with sensing and communication overheads. This setting captures the fundamental sensing-communication tradeoff and serves as a rigorous analytical foundation for understanding the dynamics of timely information acquisition in ISAC-enabled cyber-physical systems.

\section{Related Work}
This work lies at the intersection of ISAC and sensing and communication strategies for feedback control tasks. We review the related literature below.

\subsubsection{Integrated Sensing and Communication} ISAC has recently been identified by the International Telecommunication Union as one of the six usage scenarios for 6G~\cite{wp5d2023draft}. Its emergence is driven by spectrum scarcity, shared sensing-communication hardware, and the growing role of perceptive networks in which sensing is treated as a primary service~\cite{Liu2022JSAC, liu2020joint, luong2020resource}. Research on ISAC can be organised according to the level at which sensing and communication are integrated. At the waveform level, where integration is achieved by designing the dual-functional signal~\cite{mishra2019toward, hua2023mimo, xiong2023fundamental, sturm2011waveform, liu2021cramer}, the design variables include waveforms, transmit precoding and beamforming matrices, and spatial beampatterns, while the objective is to balance sensing signal-quality metrics against those for communication. At the resource level, where integration is achieved by partitioning a shared, finite resource budget between the two functions~\cite{xu2022robust, he2023full, shi2024beamforming, liao2024power, peng2024traj, liu2024uav_iot_isac}, the design variables include transmit power, subcarrier assignments, and antenna or subarray partitions, while the objective is to ensure feasibility and efficiency under shared-resource contention. A more task-oriented perspective is decision-level integration, which treats sensing and communication as actions to support downstream objectives such as monitoring, navigation, and control. However, much of the ISAC literature remains centred on traditional performance metrics such as detection probability, achievable rate, the Cram\'er--Rao bound (CRB), and signal-to-interference-plus-noise ratio (SINR). These metrics are fundamental for physical-layer design, but they do not directly capture the requirements of feedback control tasks. In contrast to the above physical-layer and spectrum-management contributions, our work focuses on the control-theoretic implications of ISAC.

\subsubsection{Sensing and Communication Strategies}
Sensor scheduling and sensor selection for control and estimation have a long history, beginning with optimal measurement strategies for linear stochastic systems~\cite{kushner1964, meier1967, athans1972}. More recent contributions have studied large-scale and networked settings, often exploiting the notion of submodularity to derive tractable policies~\cite{gupta2006stochastic, farokhi2014stochastic, hashemi2020randomized, chamon2020approximate}. These works primarily focus on deciding when and which sensors to activate in order to improve estimation accuracy. In contrast, in our setting, uncertainty is reduced through the decisions of an ISAC system. This induces a different trade-off that is not captured by the above models. Furthermore, communication scheduling and event-triggered sampling for estimation and control have been studied extensively. Early results showed that event-triggered sampling can outperform periodic sampling under communication constraints~\cite{Astrom:2002eg}. Subsequent contributions have developed transmission policies that often exhibit threshold-type structures~\cite{gupta2009d, imer2010, wu2013, nayyar2013decentralized, lipsa2011, leong2017, voi, voi2, erasure2023, touraj2024tit, uysal2022semantic}. Our work departs fundamentally from these formulations. Rather than abstracting sensing as exogenous and optimising communication decisions in isolation, we consider a unified decision-making problem in which sensing and communication are mutually exclusive actions of a single ISAC system resource.

\subsection{Contributions and Outline}
In this paper, we study how sensing and communication in an ISAC-enabled cyber-physical system should be coordinated to support feedback control of a remote source under shared platform constraints. This setting raises two key questions: (i) how sensing and communication resources should be scheduled over time, and (ii) how control actions should be applied under sporadic information availability. For a Gauss--Markov source with Bernoulli-distributed sensing and communication links and a finite-horizon linear-quadratic-Gaussian (LQG) cost, we derive the optimal ISAC and control policies. We prove that the optimal ISAC policy at the base station follows an order-threshold structure in terms of the source and base-station estimation covariances, while the optimal control policy at the source follows a certainty-equivalent structure in terms of the source state estimate. We show that the threshold region, defined as the set of estimation covariance pairs for which communication is preferred over sensing, expands with increasing source uncertainty and contracts with increasing base-station uncertainty. Our numerical analysis validates the theoretical findings. The remainder of the paper is organised as follows. Sections~\ref{sec:formulation} and \ref{sec:main_results} formulate the problem and present the main results. Section~\ref{sec:simulation} provides numerical results. Finally, Section~\ref{sec:conclusion} concludes the paper and outlines directions for future research.

\begin{figure}[t]
\centering
  \includegraphics[width=1\linewidth]{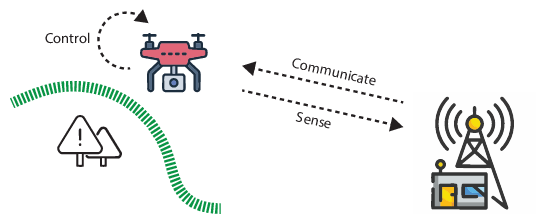}
  \caption{Illustration of an unmanned vehicle leveraging ISAC support from a base station to enhance situational awareness when onboard sensing is limited.}
  \label{fig:sketch}
\end{figure}

\section{Problem Statement}\label{sec:formulation}
We consider a cyber-physical system composed of a remote source and an ISAC-enabled base station (see Fig.~\ref{fig:sketch} for a representative example). The remote source evolves over time, yet its limited onboard sensing may prevent it from directly observing its navigational state or obtaining a full overview of the surrounding environment. The base station performs remote sensing of the source state and communicates control-relevant information back to the source. The ISAC system at each time selects either sensing or communication mode. The goal is to minimise a control cost at the remote source together with the sensing and communication operational costs over a finite time horizon of length $N$.

\subsection{System Model}
Let the remote source (i.e., remote dynamical process) be described by the Gauss--Markov model
\begin{align}
	x_{k+1} &= A x_{k} + B a_k + w_{k}, \label{eq:sys}
\end{align}
for $k \in \mathbb{N}_{[0,N]}$ with initial condition $x_{0}$, where $x_{k} \in \mathbb{R}^n$ is the state of the process, $A \in \mathbb{R}^{n \times n}$ is the state matrix, $B \in \mathbb{R}^{n \times m}$ is the input matrix, $a_k \in \mathbb{R}^m$ is the control action implemented by an actuator at the source, and $w_{k} \in \mathbb{R}^n$ is a Gaussian white noise with zero mean and covariance $W \succ 0$.

The mode of the ISAC system at the base station at each time~$k$ is determined by the mode action~$u_k$ such that $u_k =0$ if the ISAC system is in the sensing mode and $u_k =1$ if the it is in the communication mode. Both sensing and communication links are lossy and modelled as i.i.d.\ Bernoulli processes with success probabilities $\lambda_s\in [0,1]$ and $\lambda_c\in [0,1]$. Let  $\gamma_k \in \{0,1\}$ be a link success indicator at time~$k$. We can write
\begin{align}
\Pr(\gamma_k = 1 \mid u_k) =
\begin{cases}
\lambda_s, & \text{if } u_k= 0,\\[1\jot]
\lambda_c, & \text{if } u_k=1,
\end{cases}
\end{align}
and $\Pr(\gamma_k = 0 \mid u_k)=1-\Pr(\gamma_k = 1 \mid u_k)$. When $u_k =0$, the ISAC system attempts to sense a noisy version of the output of the source. In this case, the measurement at time $k$ is given~by
\begin{align}
y_{k+1} = \left\{
  \begin{array}{l l}
     C x_{k+1} + v_{k+1}, & \ \text{if} \ u_{k} = 0 \wedge \gamma_{k} = 1, \\[1\jot]
     \varnothing , & \ \text{otherwise},
  \end{array} \right.	
\end{align}
for $k \in \mathbb{N}_{[0,N]}$ with $y_{0} = C x_0 + v_0$ by convention, where $y_{k} \in \mathbb{R}^p$ is the output of the radar at the base station, $C \in \mathbb{R}^{p \times n}$ is the output matrix, $v_{k} \in \mathbb{R}^p$ is a Gaussian white noise with zero mean and covariance $V \succ 0$, and $\varnothing$ represents absence of data. However, when $u_k =1$, the ISAC system attempts to transmit control-relevant information (i.e., the best local state estimate) to the source. Let $\check{x}_{k}$ be the MMSE state estimate at the base station and $Q_k$ be its associated covariance matrix. In this case, the received message at time $k$ is given~by
\begin{align}\label{eq:channel}
z_{k+1} = \left\{
  \begin{array}{l l}
     (\check{x}_{k},Q_{k}) , & \ \text{if} \ u_{k} = 1 \wedge \gamma_{k} = 1, \\[1\jot]
     \varnothing , & \ \text{otherwise},
  \end{array} \right.	
\end{align}
for $k \in \mathbb{N}_{[0,N]}$ with $z_{0} = \varnothing$ by convention, where $z_{k}$ is the output of the channel and $\varnothing$ represents absence of data. 


We assume that the following assumptions are satisfied: the initial condition $x_0$ is a Gaussian vector with mean $m_0$ and covariance $M_0$; the random variables $x_0$, $w_t$, and $v_s$ for $t,s \in \mathbb{N}_{[0,N]}$ are mutually independent; the communication success rate is higher than the sensing success rate, i.e., $\lambda^c \geq \lambda^s$; a packet acknowledgement containing the success indicator $\gamma_k$ associated with the action $u_k$ is available to the base station before the next decision epoch; and the error due to quantisation is negligible.

\subsection{Decision-Making Problem}
In our formulation, the ISAC at the base station and the actuator at the source act as two distributed decision makers. Define the information sets $\mathcal{I}_k^{b} := \{ u_{0:k-1}, \gamma_{0:k-1} \}$ and $\mathcal{I}_k^{s} := \{ u_{0:k-1}, \gamma_{0:k-1}, z_{0:k-1} \}$. An admissible ISAC policy is a sequence of measurable mappings $\pi = \{\pi_k\}_{k=0}^N$, where $\pi_k : \mathcal{I}_k^{b} \to \{0,1\}$. The mode action at time $k$ is given by $u_k = \pi_k(\mathcal{I}_k^{b})$. Moreover, an admissible control policy is a sequence of measurable~mappings $\mu = \{\mu_k\}_{k=0}^N$, where $\mu_k : \mathcal{I}_k^{s} \to \mathbb{R}^m$. The control action at time $k$ is given by $a_k = \mu_k(\mathcal{I}_k^{s})$. Note that the MMSE state estimate at time $k$ at the base station is obtained based on $\mathcal{I}_k^{b}$ and all raw measurements $y_{0:k}$. This state estimate represents the best reconstruction of the source state, which can be communicated to the source. However, an admissible ISAC policy depends only on $\mathcal{I}_k^{b}$, which leads to a tractable policy that relies on estimation uncertainties.

Our goal is to identify the best solution $(\pi^\star,\mu^\star)$ to the stochastic optimisation problem
\begin{align}\label{eq:main_problem}
	\underset{\pi \in \mathcal{P}, \mu \in \mathcal{M}}{\minimize} \ \Upsilon(\pi,\mu),
\end{align}
where $\mathcal{P}$ and $\mathcal{M}$ are the sets of admissible ISAC policies and admissible control policies, and 
\begin{align}\label{eq:loss-function}
\Upsilon(\pi,\mu) &:= \frac{1}{N+1} \E \bigg [\sum_{k=0}^{N+1} x_{k}^T \Omega^x_{k} x_{k} + \sum_{k=0}^{N} a_k^T \Omega^a_{k} a_k \bigg] \notag\\
&\qquad + \frac{1}{N+1} \E \bigg [\sum_{k=0}^{N} \theta^s (1-u_k) + \theta^c u_k \bigg]
\end{align}
for $\Omega^x_{k} \succeq 0$ and $\Omega^a_{k} \succ 0$ as weighting matrices and $\theta^c_k>0$ and $\theta^s>0$ as weighting scalars. We assume that $\theta^c \geq \theta^s$, meaning that the cost of communication is at least equal to that of sensing.

\section{Main Results}\label{sec:main_results}
In this section, we present our main results. We first express the optimal estimators at the base station and the source.

\begin{proposition}\label{prop:1}
Let $\check{x}_k = \E[x_k | \mathcal{I}^b_k, y_{0:k}]$ and $Q_k = \Cov[x_k - \check{x}_k | \mathcal{I}^b_k, y_{0:k}]$. The optimal estimator at the base station satisfies the recursive equation
\begin{align}
	\check{x}_{k+1} &= A \check{x}_k + B a_k + \gamma_{k} (1-u_k) K_{k+1} \nu_{k+1},\\[1\jot]
	Q_{k+1} &= \big( (A Q_k A^T \!+\! W)^{-1} \!+ \gamma_{k} (1 \!-\! u_k) C^T V ^{-1} C \big)^{-1} ,
\end{align}
for $k \in \mathbb{N}_{[0,N]}$ with initial conditions $\check{x}_{0} = m_{0} + K_{0} (y_0 - Cm_0)$, $Q_{0} = (M_{0}^{-1} + C^T V^{-1} C)^{-1}$, where $\nu_k = y_k - C A \check{x}_{k-1} - C B a_{k-1}$ and $K_{k} = Q_{k} C^T V^{-1}$. Let $\hat{x}_k = \E[x_k | \mathcal{I}^s_k]$ and $P_k = \Cov[x_k - \hat{x}_k | \mathcal{I}^s_k]$. The optimal estimator at the source satisfies the recursive equation
\begin{align}
	\hat{x}_{k+1} &= A \hat{x}_k + B a_k + \gamma_{k} u_{k} A \tilde{e}_k,\\[1\jot]
	P_{k+1} &= A P_k A^T + \gamma_{k} u_{k} A (Q_k - P_k) A^T + W,
\end{align}
for $k \in \mathbb{N}_{[0,N]}$ with initial conditions $\hat{x}_0 = m_0$ and $P_0 = M_0$, where $\tilde{e}_k = \check{x}_k - \hat{x}_k$.
\end{proposition}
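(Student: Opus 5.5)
The plan is to treat the two estimators separately. In both cases the key observation is that, conditioned on the decision and acknowledgement history, everything is a linear-Gaussian system with a known, time-varying observation pattern.

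\textbf{Base station.} Conditioned on $\mathcal{I}^b_k$, the sequences $u_{0:k-1}$ and $\gamma_{0:k-1}$ are fixed. The ISAC policy $\pi_k$ depends only on $\mathcal{I}^b_k$, and the links are i.i.d. and independent of $x_0,w,v$, so conditioning on them does not change the Gaussian law of the primitive variables. The control $a_k$ is a function of $\mathcal{I}^s_k$. To keep $a_k$ known at the base station, I would argue as follows. The base station knows $\mathcal{I}^b_k$ and its own transmitted messages $z$, which are functions of $y$ and $\mathcal{I}^b$. Hence $a_k$ is measurable with respect to the base-station information and enters as a known input. The model then reduces to a standard Kalman filter with intermittent observations: $y_{k+1}=Cx_{k+1}+v_{k+1}$ is available exactly when $\gamma_k(1-u_k)=1$. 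The time update gives $A\check{x}_k+Ba_k$ with covariance $AQ_kA^T+W$. If an observation arrives, the measurement update in information form gives $Q_{k+1}^{-1}=(AQ_kA^T+W)^{-1}+C^TV^{-1}C$. The gain is $K_{k+1}=Q_{k+1}C^TV^{-1}$, by the identity relating the standard and information-form gains. The innovation is $\nu_{k+1}=y_{k+1}-C(A\check{x}_k+Ba_k)$. If no observation arrives, only the time update applies. Multiplying the correction terms by $\gamma_k(1-u_k)$ unifies the two cases. The initial condition is one measurement update of the prior $(m_0,M_0)$ with $y_0$.

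\textbf{Source.} Here $\mathcal{I}^s_k$ contains the received messages $z$. Without reception, the conditional mean propagates as $A\hat{x}_k+Ba_k$ with covariance $AP_kA^T+W$, since $w_k$ is independent of the history. With reception, $z_{k+1}=(\check{x}_k,Q_k)$ arrives, and the key step is to show that $\E[x_k\mid \mathcal{I}^s_k, z_{k+1}]=\check{x}_k$ with error covariance $Q_k$. I would prove this by showing $\check{x}_k$ is a sufficient statistic. By construction, $\mathcal{I}^s_k$ together with $\check{x}_k$ is a function of $(\mathcal{I}^b_k,y_{0:k})$; more precisely, $\mathcal{I}^s_k\subseteq\sigma(\mathcal{I}^b_k,y_{0:k})$. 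The tower property then gives $\E[x_k\mid\mathcal{I}^s_k,\check{x}_k]=\E[\E[x_k\mid\mathcal{I}^b_k,y_{0:k}]\mid\mathcal{I}^s_k,\check{x}_k]=\check{x}_k$. Because the conditional error covariance $Q_k$ is deterministic given $\mathcal{I}^b_k$, the conditional covariance is also $Q_k$. Propagating one step gives $\hat{x}_{k+1}=A\check{x}_k+Ba_k=A\hat{x}_k+Ba_k+A\tilde{e}_k$ and $P_{k+1}=AQ_kA^T+W$. Writing this as $AP_kA^T+A(Q_k-P_k)A^T+W$ and gating with $\gamma_ku_k$ yields the stated recursion. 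The initial conditions hold because $z_0=\varnothing$.

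\textbf{Main obstacle.} The hardest step is the measurability and Gaussianity bookkeeping in both cases. For the base station, the control $a_k$, which depends on the received $z$, must be shown to be known there and to cause no dual effect. For the source, the conditioning on $\check{x}_k$ must be handled correctly; in particular, the past $\hat{x}_k$ carries no extra information once $\check{x}_k$ is received. I would handle both by using that the source information is nested in the base-station information, $\sigma(\mathcal{I}^s_k)\subseteq\sigma(\mathcal{I}^b_k,y_{0:k})$, which follows by induction on $k$ from the channel model \eqref{eq:channel}. I would also use the fact that decision variables depend only on the history and not on the current noise. That fact makes the conditional distributions Gaussian with the deterministic covariances given above.
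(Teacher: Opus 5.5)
Your proposal is correct and matches the reasoning the paper indicates. The paper states the proposition without a formal proof, remarking only that the base station runs a Kalman filter with intermittent observations in information form and that the source resets to the propagated base-station estimate on successful communication; your nesting argument $\sigma(\mathcal{I}^s_k)\subseteq\sigma(\mathcal{I}^b_k,y_{0:k})$, the fact that $u_k$ depends only on the acknowledgement history, and the tower-property step supply exactly the details it omits. One note: you implicitly, and rightly, treat $z_{k+1}=(\check{x}_k,Q_k)$ as part of $\mathcal{I}^s_{k+1}$, which the stated recursion requires even though the paper's literal definition $\mathcal{I}^s_k=\{u_{0:k-1},\gamma_{0:k-1},z_{0:k-1}\}$ is off by one in the index of $z$.
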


Note that, according to Proposition~\ref{prop:1}, the base-station estimator follows the usual prediction-correction form of a Kalman filter with intermittent observations. Its covariance recursion is written in information form: the prior covariance \(AQ_kA^\top+W\) is corrected by the measurement information \(C^\top V^{-1}C\) only when \(\gamma_k(1-u_k)=1\), i.e., when a sensing operation is successful. By contrast, the source estimator cannot use raw measurements directly. When \(\gamma_ku_k=1\), i.e., when the source receives the base-station estimate, its covariance is reduced from the predicted source covariance \(AP_kA^\top+W\) to the propagated base-station covariance \(AQ_kA^\top+W\).

Since the covariance recursions are deterministic functions of the mode and acknowledgement history, the ISAC policy can equivalently be represented as a function of the covariance state \((P_k,Q_k)\). Let $\mathcal D := \{(P,Q):P,Q\in\mathbb S_+^n,\ 0\preceq Q\preceq P\}$. The prediction and update maps are $\Phi(X) =AXA^\top+W$ and $\Psi(X) = \left(\Phi(X)^{-1}+C^\top V^{-1}C\right)^{-1}$. Define $\Gamma_k := A^\top S_{k+1} B \big(B^\top S_{k+1} B + \Omega^a_k\big)^{-1} B^\top S_{k+1} A$, where $S_t \in\mathbb S_+^n$ satisfies the Riccati recursion
\begin{align}\label{eq:riccati}
S_t &= \Omega^x_t + A^T S_{t+1} A - A^T S_{t+1} B \nonumber\\[1\jot]
	&\qquad \qquad \times \big(B^T S_{t+1} B + \Omega^a_t \big)^{-1} B^T S_{t+1} A.
\end{align}
with terminal condition \(S_{N+1}=\Omega^x_{N+1}\). Let us define a reduced value function as
\begin{align*}
V_k(P,Q) &:= \min_{\pi} \mathbb \E \Big[ \sum_{t=k}^{N} ( \tr(\Gamma_tP_t) + \bar{\theta} u_t)  \mid (P_k,Q_k)=(P,Q) \Big],
\end{align*}
where $\bar{\theta} = \theta^c - \theta^s$  with terminal value $V_{N+1}(P,Q)=0$. Equivalently, $V_k=T_kV_{k+1}$, for $k\in \mathbb{N}_{[0,N]}$, where, for a generic continuation function \(V:\mathcal D\to\mathbb R\),
\[
(T_kV)(P,Q) = \operatorname{tr}(\Gamma_kP) + \min\{F_{k,\sense}^{V}(P,Q),F_{k,\comm}^{V}(P,Q)\}.
\]
Here,
\begin{align*}
F_{k,\sense}^{V}(P,Q) &= \lambda^sV(\Phi(P),\Psi(Q))\\
&\quad + (1-\lambda^s)V(\Phi(P),\Phi(Q)),\\
F_{k,\comm}^{V}(P,Q) &= \bar{\theta} + \lambda^cV(\Phi(Q),\Phi(Q))\\
&\quad + (1-\lambda^c)V(\Phi(P),\Phi(Q)).
\end{align*}
For the optimal value function, we write $F_{k,u}:=F_{k,u}^{V_{k+1}}$, for $u\in\{\sense,\comm\}$. The switching advantage function is
\[
\Delta_k(P,Q) := F_{k,\sense}^{V_{k+1}}(P,Q) - F_{k,\comm}^{V_{k+1}}(P,Q).
\]

For the structural result, we impose a local operator condition on the
continuation value. For every admissible triple $0\preceq Q\prec Q'\preceq P$, and every \(u\in\{ \sense,\comm \}\), define
\begin{align*}
A_{k,u}^{V} := F_{k,u}^{V}(\Phi(Q'),\Phi(Q')) - F_{k,u}^{V}(\Phi(Q),\Phi(Q)),\\
B_{k,u}^{V} := F_{k,u}^{V}(\Phi(P),\Psi(Q')) - F_{k,u}^{V}(\Phi(P),\Psi(Q)),\\
C_{k,u}^{V} := F_{k,u}^{V}(\Phi(P),\Phi(Q')) - F_{k,u}^{V}(\Phi(P),\Phi(Q)).
\end{align*}
Let \(d_\Phi(P,Q,Q')>0\) and \(d_\Psi(P,Q,Q')>0\) be positive gauges, and define
\begin{align*}
\kappa_{\Psi,\Phi}(P,Q,Q') &:= \frac{d_\Psi(P,Q,Q')}{d_\Phi(P,Q,Q')},\\
\eta_k^\Phi(P,Q,Q') &:= \frac{\operatorname{tr}(\Gamma_k (\Phi(Q')-\Phi(Q))) }{d_\Phi(P,Q,Q')}.
\end{align*}

\begin{definition}[Local Lipschitz Condition]
We say that $V$ satisfies the local Lipschitz condition if, for every $0\preceq Q\prec Q'\preceq P$, there exist interval-local one-sided slope bounds $L_{k,u}^{A,V}$, $L_{k,u}^{B,V}$, and $L_{k,u}^{C,V}$, for $u\in\{\sense,\comm\}$, such that $A_{k,u}^{V}\ge L_{k,u}^{A,V}d_\Phi$, $B_{k,u}^{V}\le L_{k,u}^{B,V}d_\Psi$, $C_{k,u}^{V}\le L_{k,u}^{C,V}d_\Phi$, and
\begin{align*}
&\lambda^c \eta_k^\Phi+ \lambda^c\min_{u\in\{\sense,\comm\}}L_{k,u}^{A,V} \\
&\qquad \ge \lambda^s \kappa_{\Psi,\Phi} \max_{u\in\{\sense,\comm\}}L_{k,u}^{B,V} + (\lambda^c-\lambda^s) \max_{u\in\{\sense,\comm\}}L_{k,u}^{C,V},
\end{align*}
where all slope bounds and gauges are evaluated at \((P,Q,Q')\).
\end{definition}

\begin{theorem}\label{thm:1}
Suppose that $V_{k+1}$ satisfies the local Lipschitz condition for $k\in \mathbb{N}_{[0,N]}$. Associated with the problem in \eqref{eq:main_problem}, the optimal ISAC policy obeys an order-threshold structure in the following sense: the switching region
\begin{align}
\mathcal C_k:=\{(P,Q)\in\mathcal D:\Delta_k(P,Q)\ge0\}
\end{align}
is upward closed in \(P\) and downward closed in \(Q\) under the Loewner order, i.e.,
\begin{align}
(P,Q)\in\mathcal C_k \ &\Longrightarrow \ (P',Q)\in\mathcal C_k,\\
(P,Q')\in\mathcal C_k \ &\Longrightarrow \ (P,Q)\in\mathcal C_k.
\end{align}
for $Q\preceq Q'\preceq P \preceq P'$; and the optimal control policy obeys a certainty-equivalent structure, i.e.,
\begin{align}
a_k^\star=-L_k\hat x_k,
\end{align}
for $k \in \mathbb{N}_{[0,N]}$, where $L_k = (B^T S_{k+1} B + \Omega^a_k)^{-1} B^T S_{k+1} A$.
\end{theorem}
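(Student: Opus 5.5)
The proof splits into two parts: the control structure, via a cost decomposition, and the switching structure, via monotonicity of $\Delta_k$ in each argument.

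\textbf{Certainty equivalence and reduction.} First I decouple control from the ISAC policy. Completing the square with the Riccati recursion \eqref{eq:riccati} gives the standard identity
\[
\sum_{k=0}^{N+1} x_k^T\Omega^x_k x_k+\sum_{k=0}^N a_k^T\Omega^a_k a_k = x_0^TS_0x_0+\sum_{k=0}^N\big(w_k^TS_{k+1}w_k+2w_k^TS_{k+1}(Ax_k+Ba_k)\big) +\sum_{k=0}^N (a_k+L_kx_k)^T(B^TS_{k+1}B+\Omega^a_k)(a_k+L_kx_k).
\]
Taking expectations kills the cross terms, because $w_k$ is independent of $(x_k,a_k)$. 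Since $a_k$ is $\mathcal I^s_k$-measurable, conditioning on $\mathcal I^s_k$ gives
\[
\E[(a_k+L_kx_k)^T(\cdot)(a_k+L_kx_k)] = \E[(a_k+L_k\hat x_k)^T(\cdot)(a_k+L_k\hat x_k)]+\E\operatorname{tr}(\Gamma_kP_k),
\]
using $L_k^T(B^TS_{k+1}B+\Omega^a_k)L_k=\Gamma_k$. The key observation is that, by Proposition~\ref{prop:1}, $P_k$ and $Q_k$ depend only on $(u_{0:k-1},\gamma_{0:k-1})$ and not on the control actions (there is no dual effect). Since $\pi$ uses only $\mathcal I^b_k$, the term $\operatorname{tr}(\Gamma_kP_k)$ is unaffected by $\mu$. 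Hence $a_k^\star=-L_k\hat x_k$ is optimal for any $\pi$. The remaining ISAC cost is $\E\sum_k(\operatorname{tr}(\Gamma_kP_k)+\bar\theta u_k)$ plus constants, where $\theta^s$ was absorbed. This cost is exactly the reduced value $V_k$, and the transitions of $(P_k,Q_k)$ read off Proposition~\ref{prop:1} give the operator $T_k$, the two $F$ branches and dynamic programming.

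\textbf{Threshold structure.} I will show that $\Delta_k$ is nondecreasing in $P$ and nonincreasing in $Q$ on $\mathcal D$; both closure properties of $\mathcal C_k$ then follow at once. Write
\[
\Delta_k=\lambda^sV_{k+1}(\Phi(P),\Psi(Q))+(\lambda^c-\lambda^s)V_{k+1}(\Phi(P),\Phi(Q))-\lambda^cV_{k+1}(\Phi(Q),\Phi(Q))-\bar\theta .
\]

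For monotonicity in $P$, the term $V_{k+1}(\Phi(Q),\Phi(Q))$ does not involve $P$, and $\lambda^c\ge\lambda^s$. It therefore suffices that $V_{k+1}$ is nondecreasing in its first argument. I will prove this by backward induction, using monotonicity of $\Phi$ and $\operatorname{tr}(\Gamma P)$ with $\Gamma\succeq0$. The minimum of two nondecreasing functions is nondecreasing, and admissibility of $\Phi(Q)\preceq\Phi(P')$ is preserved.

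For $Q\prec Q'$, I expand $V_{k+1}=T_{k+1}V_{k+2}$ at each point to get
\[
\Delta_k(P,Q)-\Delta_k(P,Q')=\lambda^c[\operatorname{tr}\Gamma(\Phi(Q')-\Phi(Q))+\text{min-differences of }A\text{-type}]-\lambda^s[B\text{-type}]-(\lambda^c-\lambda^s)[C\text{-type}],
\]
where the bracketed differences are those of $A_{k,u}$, $B_{k,u}$ and $C_{k,u}$. In the $B$ and $C$ terms the trace parts have the right sign and are absorbed. The difference of two minima is bounded below by the minimum over $u$ of the branchwise differences, and above by the maximum. Applying the one-sided slope bounds then gives a lower bound of
\[
d_\Phi\Big[\lambda^c\eta^\Phi_k+\lambda^c\min_uL^A_{k,u}-\lambda^s\kappa_{\Psi,\Phi}\max_uL^B_{k,u}-(\lambda^c-\lambda^s)\max_uL^C_{k,u}\Big],
\]
which is $\ge0$ by the local Lipschitz condition. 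The case $Q=Q'$ is trivial.

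\textbf{Main obstacle.} The delicate step is this $Q$-monotonicity bookkeeping. I must match each $V_{k+1}$ evaluation exactly with the $F$-differences in the definition, noting the index shift, so that the condition on "$V_{k+1}$" is interpreted at the right stage. I also have to check that the min/max inequalities point in the correct direction for each sign. If the indices fail to align, my fallback is to state the argument for the generic continuation $V$ and apply it with $V=V_{k+2}$ inside $V_{k+1}=T_{k+1}V_{k+2}$.
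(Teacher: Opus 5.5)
Your proposal is correct and follows essentially the paper's route. You obtain certainty equivalence by Riccati completion of squares, which you spell out in more detail than the paper's appeal to separation. Monotonicity of $\Delta_k$ in $P$ comes from Loewner monotonicity of $V_{k+1}$ by backward induction, and monotonicity in $Q$ comes from unrolling $V_{k+1}=T_{k+1}V_{k+2}$ and applying the min/max bounds together with the local Lipschitz condition; this is exactly the paper's class-$\mathcal W$ step, written inline.

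Your fallback is the correct reading of the indices: the bound should carry $\eta^\Phi_{k+1}$ and $L^{\cdot,V_{k+2}}_{k+1,u}$, i.e.\ the hypothesis applied at stage $k+1$. The trace terms in the $B$- and $C$-increments cancel exactly rather than just having the right sign. The case $k=N$ is trivial, since $\Delta_N\equiv-\bar\theta$.
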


Note that Theorem~\ref{thm:1} separates the roles of the actuator and the ISAC system. The control action follows the standard LQG form and depends on the source-side state estimate \(\hat x_k\). The ISAC decision, by contrast, depends on the covariance state \((P,Q)\). The order-threshold property implies that communication becomes more attractive as the source covariance \(P\) increases, while sensing becomes more attractive as the base-station covariance \(Q\) increases. In the scalar case, where the order is total, the result reduces to the usual threshold policy. However, in the matrix-valued case, the result should be interpreted as a monotone switching region under the Loewner order, rather than as a scalar threshold curve. 

We now turn to the proof of Theorem~\ref{thm:1}. We first define two structural classes of functions.
\begin{definition}[Class \(\mathcal M\)]
A function \(f:\mathcal D\to\mathbb R\) belongs to \(\mathcal M\) if it is nondecreasing in each covariance argument under the Loewner order, i.e.,
\begin{align*}
P'\succeq P \ \Longrightarrow \ f(P',Q)\ge f(P,Q),\\
Q'\succeq Q \ \Longrightarrow \ f(P,Q')\ge f(P,Q).
\end{align*}
\end{definition}

\begin{definition}[Class \(\mathcal W\)]
A function \(f:\mathcal D\to\mathbb R\) belongs to \(\mathcal W\) if, for every $0\preceq Q\preceq Q'\preceq P$, we have
\[
\lambda^c A_f(P,Q,Q') \ge \lambda^s B_f(P,Q,Q') + (\lambda^c-\lambda^s)C_f(P,Q,Q'),
\]
where
\begin{align*}
A_f(P,Q,Q') := f(\Phi(Q'),\Phi(Q'))-f(\Phi(Q),\Phi(Q)),\\
B_f(P,Q,Q') := f(\Phi(P),\Psi(Q'))-f(\Phi(P),\Psi(Q)),\\
C_f(P,Q,Q') := f(\Phi(P),\Phi(Q'))-f(\Phi(P),\Phi(Q)).
\end{align*}
\end{definition}
We define the structural class $\mathcal H:=\mathcal M\cap\mathcal W$. We now provide the proof of Theorem~\ref{thm:1}.
\begin{IEEEproof}
The proof is structured in four steps.

\noindent \textit{Step 1:}
We first exploit the separation structure of the problem. For a fixed ISAC policy $\pi$, define the value function
\begin{align}
\Upsilon'(\pi,\mu) := \E \bigg[ \sum_{k=0}^{N} h_k^T \Lambda_k h_k + \bar{\theta} u_k \bigg],
\label{eq:Vd-def-proof}
\end{align}
where $h_t := a_t + (B^\top S_{t+1} B + \Omega^a_t)^{-1} B^\top S_{t+1} A x_t$ and $\Lambda_t := B^\top S_{t+1} B + \Omega^a_t \succ 0$. Applying the separation principle, one can show that the control policy over $\mu \in \mathcal{M}$ that optimises \eqref{eq:Vd-def-proof} is $a_k^\star = -L_k \hat{x}_k$, which is a certainty-equivalent policy. Substituting this result into the cost function yields the equivalent cost function
\begin{align}
\Upsilon'(\pi,\mu^\star) &= \E \bigg[ \sum_{k=0}^{N} \big( \tr(\Gamma_k P_k) + \bar{\theta} u_k \big) \bigg].
\label{eq:reduced-cost}
\end{align}
Thus, the joint problem reduces to an estimation problem, expressed as a minimisation of a weighted sum of estimation covariances. 

\noindent \textit{Step 2:} We show that under the local Lipschitz operator condition, if $V_{k+1}\in\mathcal H$, then $V_k\in\mathcal H$. Since \(V_{N+1}=0\), we have \(V_{N+1}\in\mathcal H\). We prove separately that \(V_k\in\mathcal M\) and \(V_k\in\mathcal W\). Since \(V_{k+1}\in\mathcal M\), and since \(\Phi\) and \(\Psi\) are
Loewner monotone, the functions
\begin{align*}
(P,Q) &\mapsto V_{k+1}(\Phi(P),\Psi(Q)),\\
(P,Q) &\mapsto V_{k+1}(\Phi(P),\Phi(Q)),\\
(P,Q) &\mapsto V_{k+1}(\Phi(Q),\Phi(Q))
\end{align*}
are nondecreasing on \(\mathcal D\). Therefore \(F_{k,s}\) and \(F_{k,c}\) are nondecreasing. The pointwise minimum of nondecreasing functions is nondecreasing. Since $P\mapsto \operatorname{tr}(\Gamma_k P)$ is nondecreasing under the Loewner order for \(\Gamma_k\succeq0\), it follows that $V_k\in\mathcal M$.

It remains to prove \(V_k\in\mathcal W\). Fix $0\preceq Q\preceq Q'\preceq P$. Define
\begin{align*}
A_{V_k} &:= V_k(\Phi(Q'),\Phi(Q')) - V_k(\Phi(Q),\Phi(Q)),\\
B_{V_k} &:= V_k(\Phi(P),\Psi(Q')) - V_k(\Phi(P),\Psi(Q)),\\
C_{V_k} &:= V_k(\Phi(P),\Phi(Q')) - V_k(\Phi(P),\Phi(Q)).
\end{align*}
Using $V_k(P,Q) = \operatorname{tr}(\Gamma_kP)+\min_{u\in\{\sense,\comm\}}F_{k,u}(P,Q)$, we have
\begin{align*}
A_{V_k} &=\operatorname{tr}\!\left(\Gamma_k[\Phi(Q')-\Phi(Q)]\right)
\\
&\quad+ \min_uF_{k,u}(\Phi(Q'),\Phi(Q')) - \min_uF_{k,u}(\Phi(Q),\Phi(Q)).
\end{align*}
Since $\min_u x_u-\min_u y_u \ge \min_u(x_u-y_u)$, we get
\begin{align*}
A_{V_k} \ge \operatorname{tr}\!\left(\Gamma_k[\Phi(Q')-\Phi(Q)]\right) + \min_{u\in\{\sense,\comm\}} A_{F_{k,u}}.
\end{align*}
Similarly, since $\min_u x_u-\min_u y_u \le \max_u(x_u-y_u)$, and since the immediate term cancels in the \(B\)- and \(C\)-increments, we get $B_{V_k} \le \max_{u\in\{\sense,\comm\}} B_{F_{k,u}}$, and $C_{V_k} \le \max_{u\in\{\sense,\comm\}} C_{F_{k,u}}$. By the interval-local bounds, $\min_u A_{F_{k,u}} \ge d_\Phi\min_u \underline L_{k,u}^A$, $\max_u B_{F_{k,u}} \le d_\Psi\max_u \overline L_{k,u}^B = d_\Phi\kappa_{\Psi,\Phi}\max_u \overline L_{k,u}^B$, and $\max_u C_{F_{k,u}} \le d_\Phi\max_u \overline L_{k,u}^C$. Also, observe that $\operatorname{tr}\!\left(\Gamma_k[\Phi(Q')-\Phi(Q)]\right) = d_\Phi\eta_k^\Phi$. Combining these inequalities with the local Lipschitz condition gives
\[
\lambda^cA_{V_k}\ge\lambda^sB_{V_k}+(\lambda^c-\lambda^s)C_{V_k}.
\]
Thus, $V_k\in\mathcal W$. Consequently, $V_k\in\mathcal M\cap\mathcal W=\mathcal H$.

\noindent \textit{Step 3:} We now prove that $\Delta_k(P,Q)$ is nondecreasing in $P$ and nonincreasing in $Q$. Let \(f:=V_{k+1}\). First, take \(P'\succeq P\). Since \(\Phi\) is Loewner monotone, $\Phi(P')\succeq\Phi(P)$. Using the expression
\begin{align*}
\Delta_k(P,Q) &= -\bar{\theta} + \lambda^s f(\Phi(P),\Psi(Q)) \\
&\quad+ (\lambda^c-\lambda^s)f(\Phi(P),\Phi(Q)) - \lambda^c f(\Phi(Q),\Phi(Q)),
\end{align*}
we get
\begin{align*}
&\Delta_k(P',Q)-\Delta_k(P,Q)\\
&= \lambda^s \Big[ f(\Phi(P'),\Psi(Q))-f(\Phi(P),\Psi(Q)) \Big]\\
&+ (\lambda^c-\lambda^s) \Big[ f(\Phi(P'),\Phi(Q))-f(\Phi(P),\Phi(Q)) \Big].
\end{align*}
Because \(f\in\mathcal M\), both finite differences are nonnegative. Since $\lambda^s>0$ and $\lambda^c-\lambda^s\ge0$, we obtain $\Delta_k(P',Q)\ge\Delta_k(P,Q)$.

Second, take \(0\preceq Q\preceq Q'\preceq P\). Define $A:=A_f(P,Q,Q')$, $B:=B_f(P,Q,Q')$, $C:=C_f(P,Q,Q')$. Then
\begin{align*}
\Delta_k(P,Q')-\Delta_k(P,Q) = \lambda^s B+(\lambda^c-\lambda^s)C-\lambda^c A.
\end{align*}
Because \(f\in\mathcal W\), $\lambda^c A \ge \lambda^s B+(\lambda^c-\lambda^s)C$. Therefore, $\Delta_k(P,Q')-\Delta_k(P,Q)\le0$. This proves the claim.

\noindent \textit{Step 4:} Therefore, by Step 3, $\Delta_k$ is nondecreasing in $P$ and nonincreasing in $Q$. Now suppose $(P,Q)\in\mathcal C_k$ and $P'\succeq P$. Then, $\Delta_k(P',Q)\ge\Delta_k(P,Q)\ge0$, so $(P',Q)\in\mathcal C_k$. Similarly, suppose \((P,Q')\in\mathcal C_k\) and $Q\preceq Q'\preceq P$. Since \(\Delta_k\) is nonincreasing in \(Q\), $\Delta_k(P,Q)\ge\Delta_k(P,Q')\ge0$, so $(P,Q)\in\mathcal C_k$. This proves the claimed order-threshold structure.
\end{IEEEproof}

\begin{figure}[t]
\centering
  \includegraphics[width=0.85\linewidth, trim= 0mm 0mm 0mm 5mm,clip]{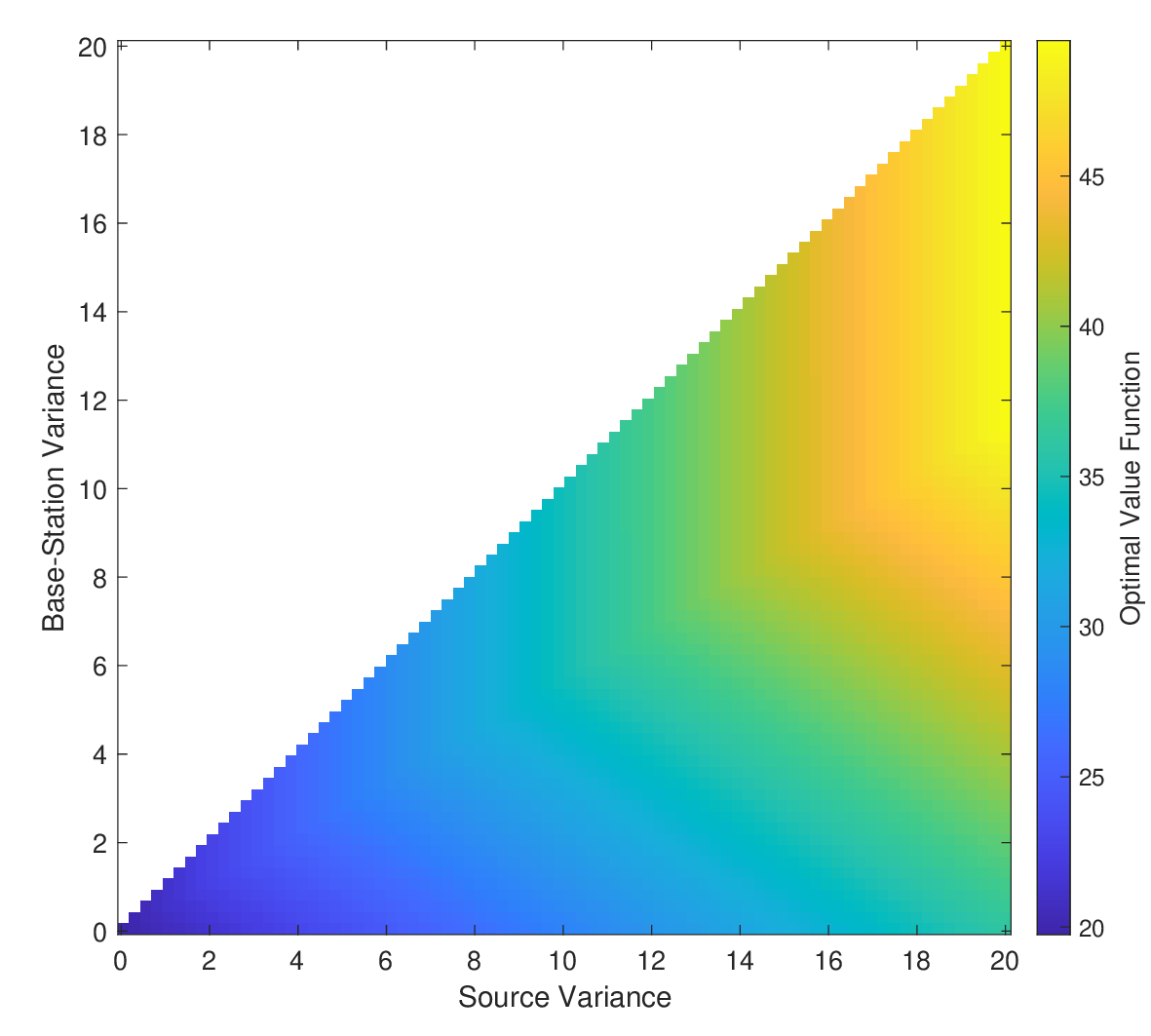}
  \caption{Value function at \(k=0\) over the source and base-station variances.}
  \label{fig:value_function}
\end{figure}

\begin{figure}[t]
\centering
  \includegraphics[width=0.85\linewidth, trim= 0mm 0mm 0mm 5mm,clip]{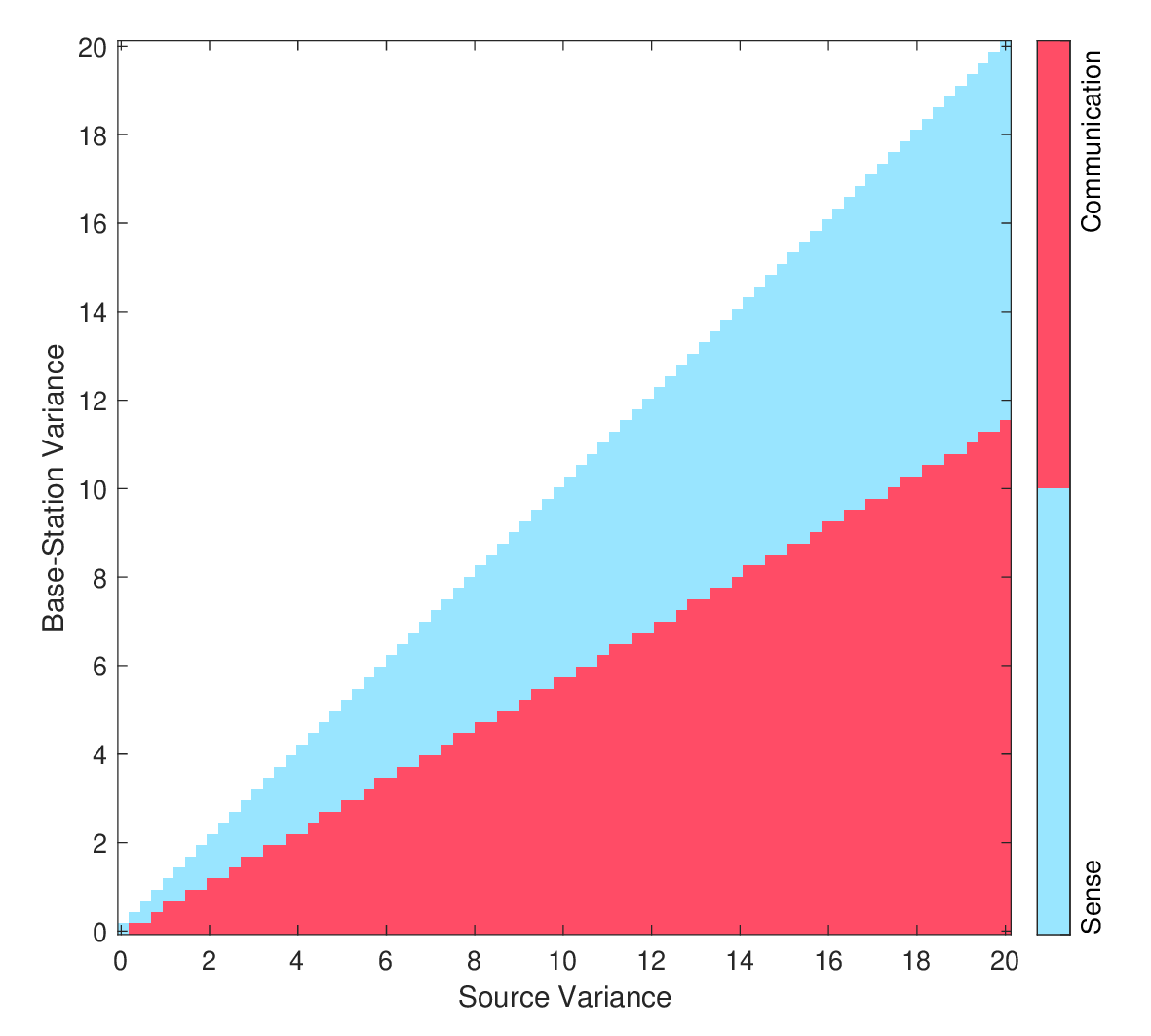}
  \caption{Optimal switching map at \(k=0\), showing the threshold structure of the optimal ISAC policy.}
  \label{fig:decision_map}
\end{figure}

\section{Numerical Experiments}\label{sec:simulation}
In this section, we present a numerical experiment. For the simulations, we consider the dynamic programming formulation with the value function corresponding to the reduced cost after applying the control policy, and use the following parameters: $A = 0.9$, $B = 1$, $C = 1.0$, $W = 0.3$, $V = 0.1$, and $\Omega^x_k = \Omega^a_k = 1$. The horizon length is chosen as $N = 50$ and link reliabilities are $(\lambda_s,\lambda_c)=(0.8,0.85)$. To compute the value function $V_0(P,Q)$ in terms of estimation variances, we evaluated the value function on a regular grid, discretised into uniformly spaced points. A sufficient resolution is chosen to reveal the qualitative structure of the value function while keeping computation tractable. Fig.~\ref{fig:value_function} illustrates the value function $V_0(P,Q)$ over the grid of source and base-station variances. As expected, the value function is monotone increasing in the source variance $P$ and the base-station variance $Q$, i.e., higher uncertainty at the source or the base station directly raises the expected quadratic cost. To complement this, Fig.~\ref{fig:decision_map} shows the corresponding decision map, indicating the regions where it is optimal to sense ($u=0$) or to communicate ($u=1$). The boundary between these two regions clearly exhibits the threshold structure predicted by our theoretical analysis, i.e., for fixed $Q$, larger values of $P$ favour communication, while for fixed $P$, larger values of $Q$ favour sensing.

\section{Conclusion}\label{sec:conclusion}
In this paper, we have studied the role of ISAC in cyber-physical systems. We characterised the structural properties of the optimal ISAC and control policies. Under a Bellman-operator condition, we established that the optimal ISAC policy at the base station admits an order-threshold structure in terms of the source and base-station estimation covariances, while the optimal control policy at the source takes the certainty-equivalent structure in terms of the source state estimate. These results provide a formal synthesis framework for ISAC-enabled remote control tasks, revealing how sensing and communication should be dynamically balanced to achieve optimal closed-loop performance.

\bibliography{../../../mybib}

\begin{thebibliography}{10}

\bibitem{wp5d2023draft}
{ITU-R}, ``Framework and overall objectives of the future development of {IMT} for 2030 and beyond,'' Tech. Rep. {Recommendation ITU-R M.2160-0}, International Telecommunication Union, 2023.

\bibitem{Liu2022JSAC}
F.~Liu, Y.~Cui, C.~Masouros, J.~Xu, T.~X. Han, Y.~Li, C.~Yuen, R.~Schober, and H.~V. Poor, ``Integrated sensing and communications: Toward dual-functional wireless networks for 6g and beyond,'' {\em IEEE Journal on Selected Areas in Communications}, vol.~40, no.~6, pp.~1728--1767, 2022.

\bibitem{liu2020joint}
F.~Liu, C.~Masouros, A.~P. Petropulu, H.~Griffiths, and L.~Hanzo, ``Joint radar and communication design: Applications, state-of-the-art, and the road ahead,'' {\em IEEE Trans. on Communications}, vol.~68, no.~6, pp.~3834--3862, 2020.

\bibitem{luong2020resource}
N.~C. Luong, X.~Lu, D.~T. Hoang, D.~Niyato, and D.~I. Kim, ``Radio resource management in joint radar and communication: A comprehensive survey,'' {\em IEEE Communications Surveys \& Tutorials}, vol.~23, no.~2, pp.~780--814, 2021.

\bibitem{mishra2019toward}
K.~V. Mishra, M.~B. Shankar, V.~Koivunen, B.~Ottersten, and S.~A. Vorobyov, ``Toward millimeter-wave joint radar communications: A signal processing perspective,'' {\em IEEE Signal Processing Magazine}, vol.~36, no.~5, pp.~100--114, 2019.

\bibitem{hua2023mimo}
H.~Hua, T.~X. Han, and J.~Xu, ``{MIMO} integrated sensing and communication: {CRB}-rate tradeoff,'' {\em IEEE Trans. on Wireless communications}, vol.~23, no.~4, pp.~2839--2854, 2023.

\bibitem{xiong2023fundamental}
Y.~Xiong, F.~Liu, Y.~Cui, W.~Yuan, T.~X. Han, and G.~Caire, ``On the fundamental tradeoff of integrated sensing and communications under {Gaussian} channels,'' {\em IEEE Trans. on Information Theory}, vol.~69, no.~9, pp.~5723--5751, 2023.

\bibitem{sturm2011waveform}
C.~Sturm and W.~Wiesbeck, ``Waveform design and signal processing aspects for fusion of wireless communications and radar sensing,'' {\em Proceedings of the IEEE}, vol.~99, no.~7, pp.~1236--1259, 2011.

\bibitem{liu2021cramer}
F.~Liu, Y.-F. Liu, A.~Li, C.~Masouros, and Y.~C. Eldar, ``Cram{\'e}r-{Rao} bound optimization for joint radar-communication beamforming,'' {\em IEEE Trans. on Signal Processing}, vol.~70, pp.~240--253, 2021.

\bibitem{xu2022robust}
D.~Xu, X.~Yu, D.~W.~K. Ng, A.~Schmeink, and R.~Schober, ``Robust and secure resource allocation for isac systems: A novel optimization framework for variable-length snapshots,'' {\em IEEE Trans. on Communications}, vol.~70, no.~12, pp.~8196--8214, 2022.

\bibitem{he2023full}
Z.~He, W.~Xu, H.~Shen, D.~W.~K. Ng, Y.~C. Eldar, and X.~You, ``Full-duplex communication for {ISAC}: Joint beamforming and power optimization,'' {\em IEEE Journal on Selected Areas in Communications}, vol.~41, no.~9, pp.~2920--2936, 2023.

\bibitem{shi2024beamforming}
Y.~Shi, L.~Li, W.~Lin, W.~Liang, and Z.~Han, ``Beamforming in secure integrated sensing and communication systems with antenna allocation,'' in {\em Proc. IEEE Globecom Workshops}, pp.~1--6, IEEE, 2024.

\bibitem{liao2024power}
B.~Liao, H.~Q. Ngo, M.~Matthaiou, and P.~J. Smith, ``Power allocation for massive {MIMO}-{ISAC} systems,'' {\em IEEE Trans. on Wireless Communications}, vol.~23, no.~10, pp.~14232--14248, 2024.

\bibitem{peng2024traj}
S.~Peng, B.~Li, L.~Liu, Z.~Fei, and D.~Niyato, ``Trajectory design and resource allocation for multi-{UAV}-assisted sensing, communication, and edge computing integration,'' {\em IEEE Trans. on Communications}, vol.~73, no.~4, pp.~2847--2861, 2024.

\bibitem{liu2024uav_iot_isac}
Z.~Liu, X.~Liu, Y.~Liu, V.~C. Leung, and T.~S. Durrani, ``{UAV} assisted integrated sensing and communications for {I}nternet of {T}hings: {3D} trajectory optimization and resource allocation,'' {\em IEEE Trans. on Wireless Communications}, vol.~23, no.~8, pp.~8654--8667, 2024.

\bibitem{kushner1964}
H.~J. Kushner, ``On the optimum timing of observations for linear control systems with unknown initial state,'' {\em IEEE Trans. on Automatic Control}, vol.~9, no.~2, pp.~144--150, 1964.

\bibitem{meier1967}
L.~Meier, J.~Peschon, and R.~M. Dressler, ``Optimal control of measurement subsystems,'' {\em IEEE Trans. on Automatic Control}, vol.~12, no.~5, pp.~528--536, 1967.

\bibitem{athans1972}
M.~Athans, ``On the determination of optimal costly measurement strategies for linear stochastic systems,'' {\em Automatica}, vol.~8, no.~4, pp.~397--412, 1972.

\bibitem{gupta2006stochastic}
V.~Gupta, T.~H. Chung, B.~Hassibi, and R.~M. Murray, ``On a stochastic sensor selection algorithm with applications in sensor scheduling and sensor coverage,'' {\em Automatica}, vol.~42, no.~2, pp.~251--260, 2006.

\bibitem{farokhi2014stochastic}
F.~Farokhi and K.~H. Johansson, ``Stochastic sensor scheduling for networked control systems,'' {\em IEEE Transactions on Automatic Control}, vol.~59, no.~5, pp.~1147--1162, 2014.

\bibitem{hashemi2020randomized}
A.~Hashemi, M.~Ghasemi, H.~Vikalo, and U.~Topcu, ``Randomized greedy sensor selection: Leveraging weak submodularity,'' {\em IEEE Transactions on Automatic Control}, vol.~66, no.~1, pp.~199--212, 2020.

\bibitem{chamon2020approximate}
L.~F. Chamon, G.~J. Pappas, and A.~Ribeiro, ``Approximate supermodularity of {Kalman} filter sensor selection,'' {\em IEEE Transactions on Automatic Control}, vol.~66, no.~1, pp.~49--63, 2020.

\bibitem{Astrom:2002eg}
K.~J. {\AA}str{\"o}m and B.~Bernhardsson, ``Comparison of {R}iemann and {L}ebesgue sampling for first order stochastic systems,'' in {\em Proc. IEEE Conf. on Decision and Control}, pp.~2011--2016, 2002.

\bibitem{gupta2009d}
V.~Gupta, A.~F. Dana, J.~P. Hespanha, R.~M. Murray, and B.~Hassibi, ``Data transmission over networks for estimation and control,'' {\em IEEE Trans. on Automatic Control}, vol.~54, no.~8, pp.~1807--1819, 2009.

\bibitem{imer2010}
O.~C. Imer and T.~Ba{\c{s}}ar, ``Optimal estimation with limited measurements,'' {\em Intl. Journal of Systems, Control and Communications}, vol.~2, no.~1-3, pp.~5--29, 2010.

\bibitem{wu2013}
J.~Wu, Q.-S. Jia, K.~H. Johansson, and L.~Shi, ``Event-based sensor data scheduling: Trade-off between communication rate and estimation quality,'' {\em IEEE Trans. on Automatic Control}, vol.~58, no.~4, pp.~1041--1046, 2013.

\bibitem{nayyar2013decentralized}
A.~Nayyar, A.~Mahajan, and D.~Teneketzis, ``Decentralized stochastic control with partial history sharing: A common information approach,'' {\em IEEE Trans. on Automatic Control}, vol.~58, no.~7, pp.~1644--1658, 2013.

\bibitem{lipsa2011}
G.~M. Lipsa and N.~C. Martins, ``Remote state estimation with communication costs for first-order {LTI} systems,'' {\em IEEE Trans. on Automatic Control}, vol.~56, no.~9, pp.~2013--2025, 2011.

\bibitem{leong2017}
A.~S. Leong, S.~Dey, and D.~E. Quevedo, ``Sensor scheduling in variance based event triggered estimation with packet drops,'' {\em IEEE Trans. on Automatic Control}, vol.~62, no.~4, pp.~1880--1895, 2017.

\bibitem{voi}
T.~Soleymani, J.~S. Baras, and S.~Hirche, ``Value of information in feedback control: Quantification,'' {\em IEEE Trans. on Automatic Control}, vol.~67, no.~7, pp.~3730--3737, 2022.

\bibitem{voi2}
T.~Soleymani, J.~S. Baras, S.~Hirche, and K.~H. Johansson, ``Value of information in feedback control: Global optimality,'' {\em IEEE Trans. on Automatic Control}, vol.~68, no.~6, pp.~3641--3647, 2023.

\bibitem{erasure2023}
T.~Soleymani, J.~S. Baras, and K.~H. Johansson, ``State estimation over delayed and lossy channels: An encoder-decoder synthesis,'' {\em IEEE Trans. on Automatic Control}, vol.~69, no.~3, pp.~1568--1583, 2024.

\bibitem{touraj2024tit}
T.~Soleymani, J.~S. Baras, and D.~G\"{u}nd\"{u}z, ``Transmit or retransmit: A tradeoff in networked control of dynamical processes over lossy channels with ideal feedback,'' {\em IEEE Trans. on Information Theory}, 2024.

\bibitem{uysal2022semantic}
E.~Uysal, O.~Kaya, A.~Ephremides, J.~Gross, M.~Codreanu, P.~Popovski, M.~Assaad, G.~Liva, A.~Munari, B.~Soret, T.~Soleymani, and K.~H. Johansson, ``Semantic communications in networked systems: A data significance perspective,'' {\em IEEE Network}, vol.~36, no.~4, pp.~233--240, 2022.

\end{thebibliography}
\bibliographystyle{ieeetr}

\end{document}